\newcounter{TempEqCnt}
\newtheorem{remark}{Remark}
\newtheorem{theorem}{Theorem}
\newtheorem{lemma}{Lemma}
\newtheorem{definition}{Definition}
\begin{document}

 %\baselineskip=12pt
%\baselineskip=24pt
%\title{ Information Theory and Application to Minority Subset Detection in Big Data}
\title{ Message Importance Measure and Its Application to Minority Subset Detection in Big Data}
\author{Pingyi~Fan, ~\IEEEmembership{Senior Member,~IEEE},  Yunquan~Dong,~Jiaxun~Lu,~Shanyun~Liu\\
State Key Laboratory on Microwave and Digital Communications\\
Tsinghua National Laboratory for Information Science and Technology \\
Department of Electronic Engineering, Tsinghua University, Beijing, P.R. China\\
E-mail:~fpy@tsinghua.edu.cn, dongyq08@mails.tsinghua.edu.cn
}
\maketitle

\vspace{-10mm}
\begin{abstract}
 Message importance measure (MIM) is an important index to describe the message importance in the scenario of big data.
   Similar to the Shannon Entropy and Renyi Entropy, MIM is required to characterize the uncertainty of a random process and some related statistical characteristics.
 Moreover, MIM also need to highlight the importance of those events with relatively small occurring probabilities, thereby is especially applicable to big data.
In this paper, we first define a parametric  MIM measure from the viewpoint of information theory and then investigate its properties.
    We also present a parameter selection principle that provides answers to the minority subsets detection problem in the statistical processing of big data.
\end{abstract}

\begin{keywords}
Message importance measure, information theory, big data, Shannon entropy, Renyi entropy.
\end{keywords}

\section{Introduction} \label{sec:1_intro}
In the big data era, the amount of data is growing steeply in a variety of areas.
    In addition to those applications that are benefited from big data, there are also cases in which only a small percent of data attracts people's interests.
For example,  in the national anti-terrorist system, only the flows of a small number of people and hazardous substance/physical agent need to be closely supervised \cite{antiterror15}.
    Also, for the synthetic ID detection \cite{synthesis10}, a few identities are artificially generated %by mixing the identifying information of some real sources,
    for the purpose of committing financial frauds.
Embedded in a huge amount of data, the detection of these minority subsets, which is also known as the atypical event detection, becomes more and more challenging.

In the framework of rate-distortion theory, minority subset detection were investigated as a probabilistic clustering problem \cite{Ando07, Ando06, Crammer04,  Gupta06}.
    By classifying the events into a number of clusters under a certain optimal criteria (e.g., the minimum within-cluster distance, the minimum compressing distortion),  several clustering approaches were proposed.
In particular, the minority events can be recognized because their distribution contrast significantly with that of the majority of the dataset.
     Moreover, a graph-based rare category detection which recognizes atypical events using the global similarity matrix was proposed in \cite{He08}.
By further considering the time-evolving of graphs, a time-flexible rare category detection algorithm was presented in \cite{He15}.
    Although these algorithm are very efficient in their respective applications, it is noted that they were developed based on traditional information measures and frameworks, which were originally designed for the processing of typical events, i.e., those majority events.
%In this paper, therefore, we are devoted to propose a message measure that highlights the importance of minority subsets.

As known, Shannon entropy \cite{Shannon48} and Renyi entropy \cite{Cover06, Erven2014} are two of the most fundamental measures in information theory and its applications.
    There also have been many successful attempts generalizing the two concepts.
Among the literature, the Kullback-Leibler divergence, f-divergence and Renyi divergence, as well as Fisher information, are the most popular ones \cite{Erven2014, Akaike1992}.
    Although these information measures have found their applications in communication theory, statistical parameter estimation, hypothesis testing or data analysis, they are no longer suitable for the minority subset detections in the big data scenarios.
In fact, these measures are focused on the encoding and signal processing of those typical sets of data.
    In order to facilitate the detection of those atypical sets of data other than conventional typical sets of data in the big data era,  new information measures are needed.

In this paper, we introduce a Message Importance Measure (MIM) that focus on those small-probability events.
    Before we proceed, let us review some of the major characteristics of Shannon entropy and Renyi Entropy first.

\subsection{Shannon Entropy and Renyi entropy}
In the finite alphabet case, for a given probability distribution $\boldsymbol{p}=(p_1,p_2,\cdots,p_n)$, the \textit{Shannon entropy} $H(\boldsymbol{p})$ is defined as
\begin{equation}
H(\boldsymbol{p})= -\sum_{i=1}^{n} p_i \log p_i,
\end{equation}
which measures the uncertainty, or the information that the distribution contains.

The \textit{Renyi entropy} $H_\alpha (\boldsymbol{p})$ of order $\alpha$ is defined as
\begin{equation}
H_\alpha (\boldsymbol{p})=\frac{1}{1-\alpha}\log \sum_{i=1}^n p_i^{\alpha},
\end{equation}
for $0<\alpha<\infty, \alpha\neq1$.
   Note that if we set $\alpha\rightarrow1$, the Renyi entropy converges to the Shannon entropy \cite{Cover06}.

In particular, both the Shannon entropy and the Renyi entropy have the following properties.
\begin{enumerate}
  \item They are non-negative;
  \item For the uniform distribution $\boldsymbol{u}=(1/n, 1/n, \cdots, 1/n)$, we have
            \begin{equation}
                H(\boldsymbol{u})=H_\alpha(\boldsymbol{u})=\log n.
            \end{equation}
  \item For any distribution $\boldsymbol{p}=(p_1,p_2,\cdots,p_n)$ without zero elements, i.e., $p_i>0, ~\forall 1\leq i\leq n$, we have
            \begin{equation}
                H(\boldsymbol{p})\leq H(\boldsymbol{u}) \quad \text{ and } \quad
                H_\alpha(\boldsymbol{p}) \leq H_\alpha(\boldsymbol{u}).
            \end{equation}
            That is, as indicators of the uncertainty of a probability distribution, both Shannon entropy and Renyi entropy achieve their maximums with the uniform distribution.
  \item For two independent probability distributions  $\boldsymbol{p}=(p_1,p_2,\cdots,p_n)$ and $\boldsymbol{q}=(q_1,q_2,\cdots,q_n)$, one has
                \begin{equation}
                    H(\boldsymbol{p},\boldsymbol{q})=-\sum_{i,j}p_i q_j \log(p_i q_j)=H(\boldsymbol{p})+H(\boldsymbol{q})
                \end{equation}
                and
                \begin{equation}
                    H_\alpha(\boldsymbol{p},\boldsymbol{q})=H_\alpha(\boldsymbol{p})+H_\alpha(\boldsymbol{q}).
                \end{equation}
\end{enumerate}

\begin{remark} In a given probability distribution $\boldsymbol{p}$, the smaller a component $p_i$ is, the less it contributes to the uncertainty of the distribution, i.e., it is clear that the corresponding event is unlikely to occur.
   Its contribution to the Shannon entropy is also small since $\lim_{p_i\rightarrow0} -p_i\log p_i=0$.
For those larger elements, although their probabilities of occurrence is larger, their contribution to the uncertainty and the entropy is also small, since $\lim_{p_i\rightarrow1} -p_i\log p_i=0$.
   In short,  compared with those events with probabilities  near from $\frac1n$, is much easier to predict the occurrence of those events with very large/small probabilities.
This is also true for  Renyi entropy.
\end{remark}

However, there are also situations where those small probability events are more concerned.
   To this end, we introduce a new measure emphasizing the importance of small probability events.

\subsection{The Message Importance Measure}
In this subsection, we shall introduce a new parametric information measure, which is referred to as the \textit{Message Importance Measure} (MIM).
\begin{definition}
        For a given probability distribution $\boldsymbol{p}=(p_1,p_2,\cdots,p_n)$ of finite alphabet, the message importance measure with parameter $\varpi$ is defined as
\begin{equation}
        L(\boldsymbol{p},\varpi)= L_\varpi(\boldsymbol{p})=\log\sum_{i=1}^{n} p_i \exp\{\varpi(1-p_i)\}
\end{equation}
where $\varpi\geq 0$ is the importance coefficient.
\end{definition}

\begin{remark}
Note that the larger $\varpi$ is, the larger contribution a small probability event has to the MIM.
    Thus, to manifest the importance of those small probability events, $\varpi$ is often chosen to be quite large, e.g., $\varpi=10$.
\end{remark}

\subsection{Outline of the Paper}
The rest of this paper is organized as follows.
    In Section \ref{sec:2_property}, we discuss the properties of the parametric message importance measure, including its extreme limit, convexity and the its relationship to the event decomposition/merging. The the selection of importance coefficient $\varpi$ is also discussed.
In Section \ref{sec:3_application}, we apply the message importance measure to the minority subset detection problem.
Section \ref{sec:4_discus} discuss the connection between the message importance measure and the Binary hypothesis testing problem.
    We present some simulation results to certificate the application of message importance measure in Section \ref{sec:5_sim} and finally, we conclude the paper in Section \ref{sec:6_conclusion}.

\section{The Properties of MIM} \label{sec:2_property}
In this section, the basic properties of the message importance measure is investigated in details.

\subsubsection{The Non-negative Property} The MIM $L(\boldsymbol{p},\varpi)$ is non-negative for any probability $\boldsymbol{p}$ and importance coefficient $\varpi\geq0$.

Note that for each element $p_i>0$ of the distribution, we have $p_i \exp\{\varpi(1-p_i)\} \geq p_i$
so that
\begin{equation}
    L(\boldsymbol{p},\varpi)=\log\sum_{i=1}^{n} p_i \exp\{\varpi(1-p_i)\}\geq \log\sum_{i=1}^{n} p_i =0.
\end{equation}

\subsubsection{The MIM of Uniform Distribution}  For the uniform distribution $\boldsymbol{u}=(1/n, 1/n, \cdots, 1/n)$, we have
\begin{equation}
    L(\boldsymbol{u},\varpi)=\varpi\left(1-\frac{1}{n}\right).
\end{equation}

\subsubsection{MIM Lower Bound}
For any probability distribution $\boldsymbol{p}=(p_1,p_2,\cdots,p_n)$ without zero elements, we have
\begin{equation} \label{rt:lowerbound}
    L(\boldsymbol{p},\varpi)\geq \varpi\left(1-\sum_{i=1}^n p_i^2\right)
\end{equation}

\begin{proof}
Define $f(x)=\exp\{-\varpi x\}$. It is readily seen that $f(x)$ is a convex function of $x\in \textbf{R}$.  According to Jensen's inequality, we have
\begin{equation}
    \mathbb{E}(f(X))\geq f(\mathbb{E}(X)),
\end{equation}
where $\mathbb{E}(\cdot)$ is the expected operation and $X$ is an arbitrary random variable.
    Assume that $X$ is drawn from the set $\{p_1,p_2,\cdots,p_n\}$ and follows the distribution $\boldsymbol{p}=\{p_1,p_2,\cdots,p_n\}$, we have
\begin{equation}
    \sum_{i=1}^{n} p_i \exp\{-\varpi p_i\} \geq  \exp\{-\sum_{i=1}^n \varpi p_i^2  \}.
\end{equation}

With some mathematical manipulations, one gets
\begin{equation}
    \log \sum_{i=1}^{n} p_i\exp\{\varpi(1-p_i)\}\geq \varpi\left(1-\sum_{i=1}^n p_i^2\right).
\end{equation}

In particular, the equality holds if and only if all the $p_i$ are equal, i. e. $p_i=1/n$.
\end{proof}

\subsubsection{The Maximum Value Property}
For any distribution  $\boldsymbol{p}=(p_1,p_2,\cdots,p_n)$ without zero elements,  if $\varpi \max_{i}{p_i}<2$ is satisfied, then we have
\begin{equation}
L(\boldsymbol{p},\varpi)\leq L(\boldsymbol{u},\varpi)
\end{equation}

\begin{proof}
Define the Lagrange as $g(\boldsymbol{p},\lambda) = \sum_{i=1}^n p_i\exp\{\varpi(1-p_i)\}+\lambda (\sum_{i=1}^n p_i - 1)$ for $\varpi>0$ and $x\in \textbf{R}$.
    It is readily seen that the partial derivative of $g(\boldsymbol{p},\lambda)$ with respect to $p_i$ is
\begin{equation}
    \frac{\partial g}{\partial p_i}=\exp\{\varpi(1-p_i)\}(1-\varpi p_i)+\lambda.
\end{equation}

By setting $\frac{\partial g}{\partial p_i}=0$ and recalling that $\sum_{i=1}^n p_i =1$, it can be readily testified that
$p_1=p_2=\cdots=p_n=1/n$ is the solution to the equations, which implies that the extreme value of $g$ can be achieved by the uniform distribution.

In addition, the second order derivative of $g(\boldsymbol{p},\lambda)$ with respect to $p_i$  is
\begin{equation}
\frac{\partial^2 g}{\partial p_i^2}= -\varpi\exp\{\varpi(1-p_i)\}(2-\varpi p_i)
\end{equation}

Therefore, if  $\varpi\max_i\{p_i\}<2$ is true, we have
\begin{equation}
    \frac{\partial^2 g}{\partial p_i^2}<0,
\end{equation}
which means that the uniform distribution reaches the maximum of $L(\boldsymbol{p},\varpi)$, i.e.,
$L(\boldsymbol{p},\varpi)\leq L(\boldsymbol{u},\varpi)$ for any $\boldsymbol{p}\neq\boldsymbol{u}$.
%The conclusion can follow it by using monotonicity of $\log(.)$ function.
\end{proof}

%In this proof, we did not use the assumption $ A \max_i\{p_i,1/n\}<2$ to replace $A\max_i\{p_i\}<2$, because $\max_i{p_i}\geq 1/n$ always holds.
\begin{remark}
    The The maximum value property of MIM is similar to the property (3) of Shannon Entropy and Renyi Entropy.
        However, they are actually different in that the maximum value property of MIM is conditioned on $\varpi\max_i{p_i}<2$, while that of Shannon and Renyi entropy is unconditional.
    %For the case $\varpi\max_i{p_i}\geq2$, we have some other results which will be discussed later.
\end{remark}

\subsubsection{The Convexity Property}
For two given probability distributions $\boldsymbol{p}$ and $\boldsymbol{q}$ without zero elements, if $\varpi\max_i\{p_i,q_i\}<2$, then we have
\begin{equation}
    L(\alpha \boldsymbol{p} +(1-\alpha)\boldsymbol{q}, \varpi)\geq \alpha L(\boldsymbol{p},\varpi)+(1-\alpha) L(\boldsymbol{q},\varpi)
\end{equation}
for any  $0\leq \alpha \leq 1$.

\begin{proof}
Define $f(x)=x\exp\{\varpi(1-x)\}$ for some $\varpi>0$ and $x\in \textbf{R}$. The first order and the second order derivative of $f(x)$ are given by, respectively
\begin{eqnarray}
    f'(x) \hspace{-3mm}&=&\hspace{-3mm}\exp\{\varpi(1-x)\}(1-\varpi x), \\
    f''(x)\hspace{-3mm}&=&\hspace{-3mm}-\varpi\exp\{\varpi(1-x)\}(2-\varpi x).
\end{eqnarray}

It is clear that $f''(x)<0$ and $f(x)$ is concave in $x$ if $\varpi x\leq 2$.
    By using Jensen's inequality for the case of $\varpi\max_i\{p_i,q_i\}<2$, we have
\begin{eqnarray}
     \hspace{-3mm}& &\hspace{-3mm} \sum_i\big(\alpha p_i+(1-\alpha)q_i\big)\exp\big\{\varpi\big(1-\alpha p_i-(1-\alpha)q_i\big)\big\} \\
     \hspace{-3mm}&&\hspace{-3mm} \geq \sum_i \alpha p_i\exp\{\varpi(1-p_i)\}+(1-\alpha)q_i\exp\{\varpi(1-q_i)\}\\ \nonumber
      \hspace{-3mm}&&\hspace{-3mm} = \hspace{-0.5mm} \alpha \hspace{-0.5mm} \sum_ip_i\exp\{\varpi(1-p_i)\} \hspace{-0.5mm} + \hspace{-0.5mm} (1-\alpha) \sum_i q_i\exp\{\varpi(1-q_i)\}.
\end{eqnarray}

By the concavity of $\log(x)$ function for $x>0$ and Jensen' inequality, we have
\begin{align}
%\log [\sum_i[\alpha p_i+(1-\alpha)q_i]\exp\{[1-\alpha p_i-(1-\alpha)q_i]A\}]\\
%\geq \alpha \log[\sum_ip_i\exp\{(1-p_i)A\}] +(1-\alpha)\log[ \sum_i q_i\exp\{(1-q_i)A\}]
L(\alpha \boldsymbol{p} +(1-\alpha)\boldsymbol{q}, \varpi)\geq \alpha L(\boldsymbol{p},\varpi)+(1-\alpha) L(\boldsymbol{q},\varpi)
\end{align}
for any $0\leq \alpha \leq 1$, which proves the property.
\end{proof}

\subsubsection{Independent Probability Distributions}
For two independent probability distributions  $\boldsymbol{p}=(p_1,p_2,\cdots,p_n)$ and $\boldsymbol{q}=(q_1,q_2,\cdots,q_n)$, we have
\begin{equation}
L(\boldsymbol{pq}, \varpi)\leq L(\boldsymbol{p},\varpi)+H(\boldsymbol{p},\varpi).
\end{equation}

\begin{proof}
By the definition of MIM, we have
\begin{equation}
L(\boldsymbol{pq},\varpi)=\log \sum_{i,j} p_iq_j \exp\{\varpi(1-p_iq_j)\}~ \text{and}
\end{equation}
\begin{equation}
L(\boldsymbol{p},\varpi)+L(\boldsymbol{q},\varpi)=\log\sum_{i,j}p_iq_j \exp\{\varpi(2-p_i-q_i)\}.
\end{equation}

It can be readily verified that
\begin{align}
2-p_i-q_i-(1-p_iq_i)=(1-p_i)(1-q_i)>0,
\end{align}
which implies $L(\boldsymbol{pq}, \varpi)\leq L(\boldsymbol{p},\varpi)+H(\boldsymbol{q},\varpi)$.
\end{proof}

\begin{remark}
It is interesting to mention that the sum of the Shannon or Renyi entropies of two independent distributions equals to the corresponding entropy of the sum distribution.
    However, for the MIM, the equality holds only for the trivial case $p=q=1$.
By the definition of MIM, when one has collected all the information from different ways,  the total importance quantity is less than  the sum of the measured importance quantity of the different parts.
    On one hand, the information collector estimates the information importance more accurately by using the expected sum of each individual information observer, which reduces the information importance quantity that the collector can obtain.
On the other hand, this property indicates that the MIM is much more from the information coding and transmission. That is, traditional encoding techniques or ideas can not be used or be suitable to the information importance measure.
\end{remark}

\subsubsection{Event Decomposition and Merging}
For a given distribution  $\boldsymbol{p}=(p_1,p_2,\cdots,p_n)$ without zero elements, we have the following conclusions:
 \begin{enumerate}
   \item[a)] if the $i$-th event is divided into two sub-events $i^{(1)}$-th and $i^{(2)}$-th, the corresponding MIM will be increased;
   \item[b)] if the $i$-th event and $j$-th event are merged into a single event, the the corresponding MIM will be decreased.
 \end{enumerate}

 \begin{proof}
  Denote $p_i^{(1)}$ and $p_i^{(2)}$ as the probabilities of the first and the second sub-events of the $i$-th event, i.e., $p_i^{(1)}+p_i^{(2)}=p_i$, we have
 \begin{eqnarray}
    \hspace{-3mm}&&\hspace{-3mm} p_i^{(1)}\exp\{\varpi(1-p_i^{(1)})\}+ p_i^{(2)}\exp\{ \varpi(1 - p_i^{(2)})\}\\
        \hspace{-3mm}&&\hspace{-3mm} \geq p_i\exp\{\varpi(1-p_i)\}.
 \end{eqnarray}

 By take the sum over all the other events of the MIM, the first part of the property is proved.
    Moreover, the second part of the property is a direct result by reversing the event decomposition.
 \end{proof}

\begin{remark}
 This property indicates that the more observations we have, the more knowledge one can extract from the events involved in the messages.
 \end{remark}

\section{Application to Minority Subsets Detection}\label{sec:3_application}
In this section, we apply the proposed message importance measure to the minority subset detection problem.
    Note that MIM is actually the logarithm of the mean value of function $f(x)=x\exp\{\varpi(1-x)\}$ for $0<x<1$ and some $\varpi>0$.
Before we proceed, a useful Lemma on $f(x)$ is introduced.

 \begin{lemma}
    $f(x)$ achieves its maximum at $x=\frac1\varpi$. In particular, $f(x)$ is monotonically increasing for $0<x<\frac1\varpi$ and  is monotonically decreasing for $\frac1\varpi\leq x <1$.
 \end{lemma}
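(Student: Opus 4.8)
The plan is to prove the lemma by a direct first-derivative sign analysis, reusing the derivative already computed in the convexity argument above. Recall that
\begin{equation}
f'(x)=\exp\{\varpi(1-x)\}(1-\varpi x).
\end{equation}
Since the exponential factor $\exp\{\varpi(1-x)\}$ is strictly positive for every real $x$ and every $\varpi>0$, the sign of $f'(x)$ is governed entirely by the linear factor $1-\varpi x$. This observation reduces the whole problem to reading off the sign of a linear function, so no genuine difficulty is anticipated.

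From here the three regimes follow immediately. First I would observe that $f'(x)>0$ precisely when $1-\varpi x>0$, i.e. $x<\tfrac1\varpi$, so $f$ is monotonically increasing on $(0,\tfrac1\varpi)$. Next, $f'(x)<0$ precisely when $x>\tfrac1\varpi$, giving monotonic decrease on $(\tfrac1\varpi,1)$. Finally $f'(x)=0$ only at $x=\tfrac1\varpi$, and since $f'$ changes sign from positive to negative there, this interior critical point is the unique global maximum on the interval. If one prefers a second-order confirmation, substituting $x=\tfrac1\varpi$ into $f''(x)=-\varpi\exp\{\varpi(1-x)\}(2-\varpi x)$ yields $f''(\tfrac1\varpi)=-\varpi\exp\{\varpi-1\}<0$, again certifying a maximum.

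The only point that requires care --- and the closest thing to an obstacle --- is verifying that the critical point $\tfrac1\varpi$ actually lies inside the domain $(0,1)$, since the sign analysis only produces an interior maximum when $\tfrac1\varpi<1$, i.e. when $\varpi>1$. This is exactly the regime of interest flagged earlier in the paper, where $\varpi$ is taken to be large (e.g. $\varpi=10$), so the hypothesis is harmless in the intended application. For completeness I would note the degenerate case $0<\varpi\leq1$: then $1-\varpi x>0$ throughout $(0,1)$, so $f$ is increasing on the whole interval and its supremum is approached at the right endpoint $x\to1$ rather than at $\tfrac1\varpi$. The lemma's statement is therefore implicitly conditioned on $\varpi>1$, which should be stated or tacitly assumed in the surrounding discussion.
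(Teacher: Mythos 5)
Your proof is correct and follows essentially the same route as the paper, which simply computes $f'(x)=(1-\varpi x)\exp\{\varpi(1-x)\}$ and reads off the sign. Your added observation that the stated maximum at $x=\tfrac1\varpi$ only lies inside $(0,1)$ when $\varpi>1$ is a legitimate refinement the paper leaves implicit, but it does not change the argument.
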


 \begin{proof}
 By taking the derivative of $f(x)$ with respective to $x$, we have
 \begin{eqnarray}
    f'(x) =  (1-\varpi x)\exp\{\varpi(1-x)\},
 \end{eqnarray}
 which proves the lemma immediately.
 \end{proof}

 \begin{lemma} \label{lem:2}
For a Bernoulli distribution $\boldsymbol{p}=(p, 1-p)$ where $0< p_0< p <\frac12$, there exists a $\varpi_0 >0$ such that for $\varpi \geq \varpi_0$, the MIM $L(\boldsymbol{p},\varpi)=\log \big(p\exp\{\varpi(1-p)\}+(1-p)\exp\{\varpi p\}\big)$ with importance coefficient $\varpi$ is strictly decreasing with $p$.
    Thus, the binary uniform distribution has the smallest message importance quantity when $\varpi>\varpi_0$,  i.e., $L(\boldsymbol{p},\varpi) > L(\boldsymbol{u}, \varpi)$ where $\boldsymbol{u}=(\frac12 ,\frac12 )$.
 %That is, the unform distribution will not have the maximum message importance quantity for some parameter $A$.
 \end{lemma}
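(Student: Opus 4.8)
The plan is to reduce the monotonicity of $L(\boldsymbol{p},\varpi)$ to that of its inner sum, and then to control the sign of the derivative through a carefully chosen auxiliary function. Since $\log$ is strictly increasing, $L(\boldsymbol{p},\varpi)$ is strictly decreasing in $p$ exactly when $S(p) = p\exp\{\varpi(1-p)\} + (1-p)\exp\{\varpi p\}$ is; note also that $S(p) = S(1-p)$, so the uniform point $p = 1/2$ is automatically a critical point of $S$, a fact I would exploit. First I would differentiate to obtain
\begin{equation}
S'(p) = \exp\{\varpi(1-p)\}(1-\varpi p) + \exp\{\varpi p\}\big(\varpi(1-p)-1\big),
\end{equation}
and factor out the positive quantity $\exp\{\varpi p\}$, so that the sign of $S'(p)$ equals the sign of
\begin{equation}
B(p) = \exp\{\varpi(1-2p)\}(1-\varpi p) + \varpi(1-p) - 1.
\end{equation}
A direct substitution gives $B(1/2)=0$, consistent with the symmetry observation above.

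The heart of the argument is to show $B(p) < 0$ throughout $(p_0, 1/2)$ once $\varpi$ is large. Rather than compare the magnitudes of the two terms of $B$ directly, I would prove that $B$ is \emph{strictly increasing} on $(p_0,1/2)$, since then $B(p) < B(1/2) = 0$ follows at once. Differentiating once more yields
\begin{equation}
B'(p) = \varpi\big[\exp\{\varpi(1-2p)\}(2\varpi p - 3) - 1\big],
\end{equation}
so it suffices to establish $h(p) := \exp\{\varpi(1-2p)\}(2\varpi p - 3) > 1$ on the interval. A short computation shows $h$ is unimodal in $p$ (increasing then decreasing, with interior maximum at $p=2/\varpi$), so its minimum over the closed interval is attained at an endpoint; it then remains only to verify $h(1/2) = \varpi - 3 > 1$ and $h(p_0) > 1$.

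The hard part will be the neighborhood of $p = 1/2$. There the exponential factor $\exp\{\varpi(1-2p)\}$ degenerates to $1$, so the naive heuristic that the exponentially large first term of $B$ dominates the linear remainder fails, and one cannot bound term by term. This is precisely why the monotonicity route is preferable: near $p=1/2$ the requirement collapses to $h(1/2) = \varpi - 3 > 1$, equivalently $B'(1/2) = \varpi(\varpi - 4) > 0$, which pins down the genuine threshold $\varpi > 4$. I would then choose $\varpi_0$ to exceed $4$, to exceed $3/(2p_0)$ so that $2\varpi p - 3 > 0$ on the whole interval (this is exactly where the hypothesis $p > p_0 > 0$ enters), and large enough that $h(p_0) > 1$, which is automatic since $h(p_0)$ grows exponentially in $\varpi$. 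For $\varpi \geq \varpi_0$ this gives $B'(p) > 0$, hence $B(p) < 0$ and $S'(p) < 0$ on $(p_0,1/2)$. Finally, strict monotonicity of $S$ together with $\boldsymbol{u} = (\tfrac12,\tfrac12)$ yields $L(\boldsymbol{p},\varpi) > L(\boldsymbol{u},\varpi)$, completing the proof.
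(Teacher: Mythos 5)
Your proof is correct, and it takes a genuinely different route from the paper's at the decisive step. The paper forms the ratio $\psi(p)=e^{-\varpi/2}\bigl(p e^{\varpi(1-p)}+(1-p)e^{\varpi p}\bigr)$, differentiates once, and argues that for large $\varpi$ the first term of $\psi'(p)$ tends to $-\infty$ while the second tends to $0$, so $\psi'<0$; this is a pointwise-in-$p$ asymptotic dominance argument. It is exactly the argument you flag as failing near $p=1/2$: since $\psi'(1/2)=0$ identically (your $B(1/2)=0$), the dominance is not uniform on $(p_0,1/2)$, and the paper's proof leaves this neighborhood unaddressed. Your approach goes one derivative further: you show the sign function $B$ of $S'$ is itself strictly increasing on $(p_0,1/2)$ via the unimodality of $h(p)=e^{\varpi(1-2p)}(2\varpi p-3)$ and two endpoint checks, so $B<B(1/2)=0$ follows without any term-by-term comparison. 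What this buys is (i) a rigorous treatment of the degenerate point $p=1/2$, closing the gap in the published argument, and (ii) an explicit, checkable threshold ($\varpi_0$ need only exceed $4$ and $3/(2p_0)$ and make $h(p_0)>1$, the last being automatic for large $\varpi$ since $1-2p_0>0$), whereas the paper's $\varpi_0(p_0)$ is purely existential. The final deduction $L(\boldsymbol{p},\varpi)>L(\boldsymbol{u},\varpi)$ from strict monotonicity of $S$ on $[p_0,1/2]$ is the same in both.
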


 \begin{proof}
 Consider the difference
 \begin{eqnarray}\nonumber
\hspace{-3mm}&&\hspace{-3mm} L(\boldsymbol{p}, \varpi)-L(\boldsymbol{u}, \varpi)\\ \nonumber
\hspace{-3mm}&&\hspace{-3mm}= \log \big( p\exp\{\varpi(1-p)\} \hspace{-0.7mm}+ \hspace{-0.7mm}(1-p)\exp\{\varpi p\} \big) \hspace{-0.7mm} - \hspace{-0.7mm}\log \exp\left\{\varpi/2\right\}\\ \nonumber
 \hspace{-3mm}&&\hspace{-3mm}=\log \left( p\exp\left\{\varpi\left(1/2 -p\right)\right\}+(1-p)\exp\{\varpi(p-1/2 )\}\right)\\
 \hspace{-3mm}&&\hspace{-3mm}\triangleq \log\psi(p) .
 \end{eqnarray}

 The derivative of $\psi (p)$ with respective to $p$ is given by
 \begin{eqnarray}
    \psi'(p) \hspace{-3mm}&=&\hspace{-3mm} \exp\left\{\varpi\left(1/2 -p\right)\right\} (1-\varpi p) \\
                 \hspace{-3mm}&&\hspace{-3mm}  +\exp\left\{\varpi\left(p-1/2 \right)\right\}(\varpi-1-\varpi p).
 \end{eqnarray}

 Since $p_0< p < \frac12 $, the exponential parts in the first term is positive and the second term is  negative, respectively.  That is,
 \begin{eqnarray}
 \hspace{-3mm}&&\hspace{-3mm}\lim_{\varpi \rightarrow \infty} \exp\{\varpi(1/2 -p)\}(1-\varpi p)= -\infty, \\
\hspace{-3mm}&&\hspace{-3mm} \lim_{\varpi\rightarrow \infty} \exp\{\varpi(p-1/2 )\}(\varpi-1-\varpi p)=0.
 \end{eqnarray}

 It is clear that there exists some $\varpi_0(p_0)$ such that for $\varpi \geq \varpi_0(p_0)$, $\psi(p_0)$ is strictly decreasing with respect to $p$.
    Thus we have $\psi(p)>\psi(\frac12 )=1$, which leads to
 \begin{equation}
 L(\boldsymbol{p}, \varpi)-L(\boldsymbol{u}, \varpi)\geq 0.
 \end{equation}

This proves the Lemma.
  \end{proof}

\begin{remark}
The Lemma indicates that for the Bernoulli distribution, if importance coefficient $\varpi$ is properly selected, the uniform distribution has the least importance quantity.
    This is very different from the conventional information measures such as Shannon entropy and Renyi entropy.
In fact, the uniform distribution has the largest uncertainty, but may have lest importance since its distribution mode are too popular.
\end{remark}

\begin{remark}
In the conventional source encoding of information theory, those typical sets (events with relatively large occurring probabilities) are more important than those atypical sets (events with relatively small occurring probabilities).
     In big data, especially for those minority subsets detections, however, the typical sets are less important than those atypical sets.
Actually, for the minority subset detections in big data era, the proposed parametric information measure MIM, which can reflect the social values of the atypical sets, may have much potential.
\end{remark}

  \begin{lemma}
   For a probability distribution $\boldsymbol{p}=(p_1,p_2,\cdots,p_n)$ with $0<\min_i\{p_i\}$, there exists a $\varpi_0 >0$ such that for $\varpi \geq \varpi_0$,  the parametric message importance measure satisfies
   \begin{align}
 L(\boldsymbol{p}, \varpi)=\log \sum_{i=1}^n p_i\exp\{\varpi(1-p_i)\} > L(\boldsymbol{u}, \varpi),
 \end{align}
 where $\boldsymbol{u}=(1/n, 1/n,\cdots,1/n)$ is the uniform distribution.
    When the importance coefficient $\varpi$ is sufficient large, we further have
 \begin{equation} \label{rt:m_lbound}
 L(\boldsymbol{p}, \varpi)\doteq \varpi(1-p_{\min}) +\log p_{\min}.,
 \end{equation}
 where $p_{\min}=\min_i\{p_i\}$. % and $k$ is the number of  index $i$ in $P$ satisfying $p_i=p_{\min}$.
  \end{lemma}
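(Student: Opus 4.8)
The plan is to reduce both assertions to a single factorization of the sum defining $L(\boldsymbol{p},\varpi)$ in which the dominant exponential is pulled out. Writing $p_{\min}=\min_i\{p_i\}$, I would express
\[
L(\boldsymbol{p},\varpi)=\varpi(1-p_{\min})+\log S(\varpi), \qquad S(\varpi)=\sum_{i=1}^n p_i\exp\{-\varpi(p_i-p_{\min})\}.
\]
Every exponent $-(p_i-p_{\min})$ is non-positive, so $S(\varpi)$ is a sum in which each term with $p_i>p_{\min}$ decays to zero as $\varpi\to\infty$, while the term attaining the minimum contributes exactly $p_{\min}$. This single identity carries all the information needed for both claims, and it sidesteps any appeal to Lagrange multipliers.

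For the asymptotic \eqref{rt:m_lbound}, I would let $\varpi\to\infty$ in $S(\varpi)$. Under the generic assumption that the minimizer is unique, $S(\varpi)\to p_{\min}$, whence $\log S(\varpi)\to\log p_{\min}$ and the stated relation $L(\boldsymbol{p},\varpi)\doteq\varpi(1-p_{\min})+\log p_{\min}$ follows, with $\doteq$ read as equality up to a vanishing additive error. The only care needed is the case of repeated minimal values, where the constant becomes $\log(k\,p_{\min})$ for a $k$-fold minimizer; the leading term $\varpi(1-p_{\min})$ is unaffected, so the first-order behaviour is robust.

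For the strict comparison $L(\boldsymbol{p},\varpi)>L(\boldsymbol{u},\varpi)$ I would not even need the full limit: keeping only the minimizing term gives the clean lower bound $S(\varpi)\geq p_{\min}$, hence $L(\boldsymbol{p},\varpi)\geq\varpi(1-p_{\min})+\log p_{\min}$. Recalling from the uniform-distribution property that $L(\boldsymbol{u},\varpi)=\varpi(1-1/n)$, subtraction yields
\[
L(\boldsymbol{p},\varpi)-L(\boldsymbol{u},\varpi)\geq\varpi\left(\tfrac1n-p_{\min}\right)+\log p_{\min}.
\]
Here I would invoke the elementary fact that for any non-uniform distribution $p_{\min}<1/n$ (if every $p_i\geq 1/n$ and $\sum_i p_i=1$, then all $p_i=1/n$), so the coefficient $1/n-p_{\min}$ is strictly positive; the right-hand side therefore diverges to $+\infty$, and taking $\varpi_0$ to be any value past which it is positive establishes the strict inequality for all $\varpi\geq\varpi_0$.

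The main obstacle, and the place I would be most careful, is matching the exact constant $\log p_{\min}$ in \eqref{rt:m_lbound}: this requires a genuine two-sided squeeze on $S(\varpi)$, since the lower bound $S(\varpi)\geq p_{\min}$ is trivial but the matching upper bound needs the uniform decay of the off-minimum terms, together with the implicit assumption that $p_{\min}$ is attained uniquely. I would also flag that the strict inequality in the first claim excludes the degenerate case $\boldsymbol{p}=\boldsymbol{u}$, for which equality holds identically in $\varpi$; the inequality is meaningful precisely because $\boldsymbol{p}$ is non-uniform, which is exactly what forces $p_{\min}<1/n$.
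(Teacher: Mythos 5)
Your proof is correct, and it takes a cleaner route than the paper's. The paper works with the difference $L(\boldsymbol{p},\varpi)-L(\boldsymbol{u},\varpi)=\log\sum_i p_i\exp\{\varpi(1/n-p_i)\}$, splits the sum according to whether $p_i\geq 1/n$ or $p_i<1/n$, argues qualitatively that the second group eventually dominates, and then justifies the asymptotic by showing that the ratio of $\log\sum_i p_i\exp\{\varpi(1/n-p_i)\}$ to $\log\bigl(p_{\min}\exp\{\varpi(1/n-p_{\min})\}\bigr)$ tends to $1$. You instead factor out the dominant exponential to get $L(\boldsymbol{p},\varpi)=\varpi(1-p_{\min})+\log S(\varpi)$ and use the one-term bound $S(\varpi)\geq p_{\min}$. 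This buys two things the paper's argument does not deliver. First, your bound is quantitative: the difference is at least $\varpi(1/n-p_{\min})+\log p_{\min}$, so any $\varpi>-\log p_{\min}/(1/n-p_{\min})$ works as $\varpi_0$ --- which is exactly the threshold the paper only recovers later, in Theorem 1, by a separate argument via the decomposition/merging property. Second, your handling of the additive constant in the asymptotic is more honest: a ratio of logarithms tending to $1$ only pins down the leading term $\varpi(1/n-p_{\min})$, not the constant $\log p_{\min}$, whereas your squeeze on $S(\varpi)$ identifies the limit as $\log(k\,p_{\min})$ with $k$ the multiplicity of the minimizer, correctly flagging that the stated constant is exact only when the minimum is attained uniquely. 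You are also right to note that the strict inequality requires $\boldsymbol{p}\neq\boldsymbol{u}$ (equivalently $p_{\min}<1/n$); the paper leaves this hypothesis implicit.
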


\begin{proof}
By considering the difference
\begin{eqnarray}
\hspace{-3mm}&&\hspace{-3mm} L(\boldsymbol{p},\varpi)-L(\boldsymbol{u},\varpi) \\
\hspace{-3mm}&&\hspace{-3mm} =\log \sum_i p_i \exp\{\varpi(1/n-p_i)\} \\ \label{eq:43}
\hspace{-3mm}&&\hspace{-3mm} =\log \Big(\sum_{p_i\geq 1/n}p_i \exp\{\varpi(1/n-p_i)\} \\
\hspace{3mm}&&\hspace{11mm}      +\sum_{p_i < 1/n}p_i \exp\{\varpi(1/n-p_i)\} \Big).
\end{eqnarray}

It clear in (\ref{eq:43}) that the first sum item decreases exponentially with the increase of $\varpi$ while the second sum item is increasing with $\varpi$.
    Therefore, as $\varpi$ is increased to sufficiently large, the second sum item dominates the value of the difference. In particular,

\begin{eqnarray}
\hspace{-10mm}&&\hspace{-3mm} \lim_{\varpi \rightarrow \infty}
                                \frac{\log \sum_i p_i \exp \{\varpi(1/n-p_i)\}} {\log\big(p_{\min}\exp\{\varpi(1/n-p_{\min})\}\big)} \\
\hspace{-10mm} &&\hspace{-3mm} = \hspace{-0.8mm} \lim_{\varpi \rightarrow \infty}
                                \frac{ \varpi(1/n \hspace{-0.6mm} - \hspace{-0.6mm} p_{\min}) \hspace{-0.6mm} + \hspace{-0.6mm} \log \sum_{i=1}^{n}p_i \exp\{ \varpi(p_{\min} \hspace{-0.6mm} - \hspace{-0.6mm} p_i) \} }
                                        { \varpi(1/n-p_{\min})+ \log p_{\min}  } \\
\hspace{-10mm} &&\hspace{-3mm} = \hspace{0.3mm}1.
\end{eqnarray}
% where $k$ is the number of $i$ in $P$ satisfying $p_i=p_{\min}$.

Therefore, when the importance coefficient $\varpi$ is sufficient large,  we have
 \begin{eqnarray}
    L(\boldsymbol{p}, \varpi)\hspace{-3mm} &\doteq&\hspace{-3mm} L(\boldsymbol{u},\varpi) \hspace{-0.5mm} + \hspace{-0.5mm}
                                                                                                                             \log\big(p_{\min}\exp\{\varpi(1/n \hspace{-0.5mm}-\hspace{-0.5mm}p_{\min})\}\big)\\
                                                \hspace{-3mm} &\doteq&\hspace{-3mm}  \varpi(1-1/n) +\log p_{\min} + \varpi(1/n-p_{\min})  \\
                                                \hspace{-3mm} &\doteq&\hspace{-3mm}  \varpi(1-p_{\min}) +\log p_{\min}.
 \end{eqnarray}
which proves the lemma.

\end{proof}

\begin{remark}
This lemma indicates that when the importance coefficient $\varpi$ of MIM is selected proper large, it can be used to dig out the meaning of those events with small probabilities.
    This also validates why MIM can be used to the minority subset detections in big data.
\end{remark}

\subsection{Binary Minority Subset Detection} \label{sec:3a}

  Consider the scenario with two possible events, in which one occurs with a very small probability $0<p\ll 1$ and the other occurs with a much larger probability $1-p$, i.e., $\boldsymbol{p}=\{p,1-p\}$.
    For this case, the corresponding MIM is
  \begin{eqnarray}
        L(\boldsymbol{p}, \varpi) \hspace{-3mm}&=&\hspace{-3mm}  \log \big(p\exp\{\varpi(1-p)\} \hspace{-0.5mm}+\hspace{-0.5mm} (1-p)\exp\{\varpi p\}\big) \\
                                                       \hspace{-3mm}&=&\hspace{-3mm} \log\big(1-p+p\exp\{\varpi(1-2p)\}\big)+\varpi p\\ \label{rt:L_apx}
                                                      \hspace{-3mm}&\doteq&\hspace{-3mm}   \log\big(1+p\exp\{\varpi(1-2p)\}\big)+\varpi p,
  \end{eqnarray}
which can be used to evaluate the importance for those events with very small occurring probabilities.

 As known to all, hypothesis testing is a very important technique to discriminate events based on the logarithmic maximum likelihood ratios.
    In general,  the prior probability of each event are assumed to be known or can be predicated exactly.
 However, there are also many scenarios in which the  probability of some events within the minority part can not be known exactly, e.g., only a rough range of the probability is given.
    In this case, further estimation of the prior probabilities is needed in order to employ the hypothesis testing technique.
 To this end, we present a method to determine these prior probabilities based on the proposed  message importance measure.

For a binary hypothesis testing problem, we denote $H_0$ and $H_1$ as the hypotheses for ``event 0" and ``event 1", where "event 0" is the minority part with a very small occurring probability.
    It is only known that the occurring  probability of ``event 0" satisfies $p_0^{(1)}\leq P(H_0)\leq p_0^{(2)}$.
Our problem is to estimate the prior probability of $H_0$ for the hypothesis testing, under a certain optimal criteria.

In this paper,  this problem will be solved through an induction process, which consists of two steps.
    In the first step, the importance coefficient $\varpi$ of the MIM is estimated under a proper estimation criterion.
In the second step, the prior probability $p_0$ is obtained based on $\varpi$ and the corresponding message importance measure.
    The details are given as follows.

\subsubsection{Step 1}
 To be fair, one can think that the given probability bounds $p_0^{(1)}$ and $p_0^{(2)}$ are assumed to be equal important for the evaluation of $P(H_0)$, so we have
 \begin{equation} \label{eq:step1}
     L([p_0^{(1)},1-p_0^{(1)}], \varpi)=L([p_0^{(2)},1-p_0^{(2)}], \varpi).
 \end{equation}

 Since the  $p_0^{(1)}$ and $p_0^{(2)}$ are both small numbers near zero,  by employing (\ref{rt:L_apx}), the the two sides of  (\ref{eq:step1}) can be approximated by,
  \begin{eqnarray}
  \hspace{-6mm} &&\log(1+p_0{(1)}\exp\{\varpi(1-2p_0^{(1)}\})+\varpi p_0^{(1)}\\
  \hspace{-6mm} &&  =\log(1+p_0{(2)}\exp\{\varpi(1-2p_0^{(2)}\})+\varpi p_0^{(2)}.
  \end{eqnarray}

  It yields
  \begin{eqnarray}
  \hspace{-6mm}  (p_0^{(2)}-p_0^{(1)})\varpi \hspace{-3mm} &=& \hspace{-3mm}
                                                                \log \frac{1+p_0{(1)}\exp\{\varpi(1-2p_0^{(1)})\}}{1+p_0{(2)}\exp\{\varpi(1-2p_0^{(2)})\}} \\
                                                    \label{eq:56}
                                                    \hspace{-3mm} &=& \hspace{-3mm}  \log\frac{p_0^{(2)}}{p_0^{(1)}},
  \end{eqnarray}
where (\ref{eq:56}) is because $\exp\{\varpi(1-2p_0^{(1)})\}\gg 1$ and $\exp\{\varpi(1-2p_0^{(2)})\}\gg 1$.

Therefore, we have
   \begin{equation}
   \varpi=\frac{\log p_0^{(2)}-\log p_0^{(1)}}{p_0^{(2)}-p_0^{(1)}}.
   \end{equation}

  \subsubsection{Step 2}
By Lemma 1, we know that the optimal estimation of the prior probability of $H_0$ under the MIM criteria is
   \begin{eqnarray}
        \hat{p}_0 \hspace{-3mm}&=&\hspace{-3mm} 1/\varpi\\ \label{rt:p_0}
            \hspace{-3mm}&=&\hspace{-3mm} \frac{p_0^{(2)}-p_0^{(1)}} {\log p_0^{(2)}-\log p_0^{(1)}}.
   \end{eqnarray}

Specifically, $\hat{p}_0 $  is obtained by maximizing the MIM while balancing the fairness between the two estimation boundaries  $p_0^{(1)} $ and $p_0^{(2)}$.
   In particular, the following lemma further validates the reasonableness of the two-step predication of the prior probability $P(H_0)$.

   \begin{lemma} \label{lem:abc}
        For any $0<a<b<1$,  there exists a real number $c$ given by
   \begin{equation} \label{rt:abc}
        c=\frac{b-a}{\log b -\log a}
   \end{equation}
   satisfying  $a<c<b$.
   \end{lemma}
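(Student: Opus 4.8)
The plan is to recognize $c$ as the logarithmic mean of $a$ and $b$ and to show directly that it lies strictly between them. Since $0<a<b<1$ we have both $b-a>0$ and $\log b-\log a>0$, so $c$ is a well-defined positive number; the only real task is to sandwich it.

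The cleanest route I would take is the mean value theorem. Applying it to $\phi(x)=\log x$ on the closed interval $[a,b]$, there exists a point $\xi\in(a,b)$ with
\begin{equation}
\phi'(\xi)=\frac{\phi(b)-\phi(a)}{b-a},\qquad\text{i.e.,}\qquad \frac{1}{\xi}=\frac{\log b-\log a}{b-a}.
\end{equation}
Solving for $\xi$ gives exactly $\xi=\frac{b-a}{\log b-\log a}=c$. Because the mean value theorem places $\xi$ strictly inside the open interval $(a,b)$, we obtain $a<c<b$ at once, which is the claim.

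As an alternative I would reduce the statement to two elementary inequalities by setting $t=b/a>1$, under which $a<c$ becomes $\log t<t-1$ and $c<b$ becomes $t-1<t\log t$. Each follows from a one-line monotonicity argument: the auxiliary functions $g(t)=(t-1)-\log t$ and $h(t)=t\log t-(t-1)$ both vanish at $t=1$ and have strictly positive derivatives $g'(t)=1-\frac{1}{t}$ and $h'(t)=\log t$ for $t>1$, hence are strictly positive on $(1,\infty)$.

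I do not anticipate a genuine obstacle. The only point requiring care is strictness of the double inequality: in the mean value theorem version this is automatic because $\xi$ lies in the \emph{open} interval $(a,b)$, while in the auxiliary-function version it follows from the derivatives being strictly positive for $t>1$ together with the vanishing of $g$ and $h$ at the endpoint $t=1$. Either argument yields the strict bound $a<c<b$.
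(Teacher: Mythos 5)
Your proposal is correct. The mean value theorem argument is airtight: $\phi(x)=\log x$ is continuous on $[a,b]$ and differentiable on $(a,b)$, so some $\xi\in(a,b)$ satisfies $1/\xi=(\log b-\log a)/(b-a)$, and that $\xi$ is exactly $c$; strictness comes for free from the open interval. Your second route via $t=b/a$ and the inequalities $\log t<t-1<t\log t$ is also complete and fully elementary. The paper instead argues by contradiction: assuming $c>b$, it introduces the auxiliary function $g(x)=\log x-\log a-\frac{x-a}{c}$, observes $g(a)=g(b)=0$, and notes that $g'(x)=\frac1x-\frac1c>0$ for $x<c$ forces $g(a)<g(b)$, a contradiction; the bound $c>a$ is handled symmetrically. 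This is really the same underlying fact as your MVT argument --- $g$ is precisely the secant-difference function one uses to prove the mean value theorem from Rolle's theorem --- but your direct formulation is tidier: it delivers both inequalities in one stroke, whereas the paper must run two separate contradiction arguments (and, as written, only rules out $c>b$ rather than $c\geq b$, leaving the boundary case implicit). Your $t=b/a$ alternative buys something the paper's proof does not: it exposes the statement as two standard one-variable inequalities and avoids any appeal to theorems about differentiable functions on intervals beyond sign-of-derivative monotonicity.
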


   \begin{proof}
   The lemma is proved by contradiction.
        For a $c$ given by (\ref{rt:abc}), we first assume that $c>b$.

   Define $g(x)=\log x -\log a -\frac{x-a}{c}$, it is readily seen that $g(a)=g(b)=0$.
Note that the derivative with respect to $x$ is given by
   \begin{equation}
        g'(x)=\frac{1}{x}-\frac{1}{c}.
   \end{equation}

   It is clear that  $g(x)$ is strictly monotonically increasing with $x$ for any $0<x<c$.
        By the assumption that $c>b$, that if $x<c$, we have $g(a)<g(b)$, which is contradict with the fact that $g(a)=g(b)=0$.
Thus, we must have $c<b$.
   Likewise, one can prove that $c>a$ holds.
By combing the two cases, the lemma is proved.
   \end{proof}

   Based on Lemma \ref{lem:abc}, it is clear  that the estimated probability $\hat{p}_0$ (see (\ref{rt:p_0})) satisfies $ p_0^{(1)}<\hat{p}_0<p_0^{(2)}$.
    This is consistent with one's expectation.
    In fact, such a estimation also makes sense in the view point of accuracy, owing to  $\lim_{p_0^{(1)}\rightarrow p_0^{(2)}} \frac{p_0^{(2)}-p_0^{(1)}} {\log p_0^{(2)}-\log p_0^{(1)}} = p_0^{(2)}$.
    That is, if the prior estimated values of the probability on $H_0$ is accurate (bounds $p_0^{(1)}$ and $p_0^{(2)}$ are tight), the estimated probability value  given by our two-step method will obtain the same result. This certificates the usefulness of our defined message importance measure in theory.

   \subsection{$M$-ary Minority Subset Detection} \label{sec:3b}
    $M$-ary hypothesis testing is a generalization of binary hypothesis, which has been widely used in signal detections, pattern recognition and group detections.
        The minority subset detection for $M$-ary hypothesis detecting is described as follows.
    For a given probability distribution $\boldsymbol{p}=\{p_1,p_2,\cdots, p_n\}$, we want to select a proper importance coefficient $\varpi$ for the MIM measure $L(\boldsymbol{p}, \varpi)$ so that  it can be used to characterize the minority subsets importance while reducing the effects of  uniform distribution.
        To be specific, we consider the problem of finding an  importance coefficient $\varpi$ satisfying
    \begin{align}
            L(\boldsymbol{p}, \varpi)\geq L(\boldsymbol{u}, \varpi).
    \end{align}

 By using the events decomposition and merging property (property 7), we have
 \begin{eqnarray} \nonumber
    L(\boldsymbol{p}, \varpi)\geq \log\big(p_{\min}\exp\{ \hspace{-3mm}&&\hspace{-3mm}\varpi(1-p_{\min})\}\\
          \hspace{-3mm}&&\hspace{-3mm} + (1-p_{\min})\exp\{\varpi p_{\min}\}\big)
 \end{eqnarray}
 where $p_{\min}=\min_i\{p_i\}$.

 It is clear that the solution $\varpi$ satisfies
 \begin{eqnarray}
\hspace{-6mm} \log\hspace{-3mm}&&\hspace{-3mm} \big( p_{\min}   \exp\{\varpi(1-p_{\min})\}+(1-p_{\min})\exp\{\varpi p_{\min}\}\big) \\
                                      \hspace{-3mm}&&\hspace{-3mm} -\log\big(\exp\{\varpi(1-1/n)\}\big)\geq 0.
 \end{eqnarray}

 After some manipulations, we have
 \begin{eqnarray}\nonumber
\hspace{-6mm}    \log\big((1-p_{\min})+p_{\min}\exp\{\hspace{-3mm}&&\hspace{-3mm}\varpi(1-2p_{\min})\}\big) \\
                                                            \hspace{-3mm}&&\hspace{-3mm}\geq \varpi(1-p_{\min}-1/n).
 \end{eqnarray}

 By choosing such a $\varpi$ that
 \begin{align}
    \log\big(p_{\min}\exp\{\varpi(1-2p_{\min})\}\big) \geq \varpi(1-p_{\min}-1/n),
 \end{align}
we have
 \begin{align}
        \varpi \geq -\frac{\log p_{\min}}{1/n-p_{\min}}.
 \end{align}

 Based on the discussion, we have the following theorem.

 \begin{theorem} \label{th:m_ary}
    Given a probability distribution $\boldsymbol{p}=\{p_1,p_2,\cdots, p_n\}$ without zero elements, if $\varpi$ satisfies
 \begin{align}
        \varpi \geq -\frac{\log p_{\min}}{1/n-p_{\min}},
 \end{align}
 then we have
 \begin{equation}
  L(\boldsymbol{p}, \varpi)\geq L(\boldsymbol{u}, \varpi),
  \end{equation}
  where $p_{\min}=\min_i \{p_i\}$ and $\boldsymbol{u}=\{1/n,1/n,\cdots,1/n\}$ is the uniform distribution for $n>2$.
 \end{theorem}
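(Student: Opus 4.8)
The plan is to reduce the general $n$-point distribution to a two-point (Bernoulli) distribution and then verify a scalar inequality. First I would exploit the event decomposition and merging property (property 7): keeping the smallest component $p_{\min}$ isolated, I repeatedly merge the remaining $n-1$ components---which is possible precisely because $n>2$ guarantees at least two components available to merge---into a single mass $1-p_{\min}$. Since each pairwise merge can only decrease the MIM, this yields $L(\boldsymbol{p},\varpi)\ge L((p_{\min},1-p_{\min}),\varpi)$. Combined with the uniform-distribution value $L(\boldsymbol{u},\varpi)=\varpi(1-1/n)$, it therefore suffices to establish $L((p_{\min},1-p_{\min}),\varpi)\ge\varpi(1-1/n)$.

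Next I would rewrite the two-point MIM by factoring out $\exp\{\varpi p_{\min}\}$, giving $L((p_{\min},1-p_{\min}),\varpi)=\varpi p_{\min}+\log\big((1-p_{\min})+p_{\min}\exp\{\varpi(1-2p_{\min})\}\big)$. The target inequality then becomes $\log\big((1-p_{\min})+p_{\min}\exp\{\varpi(1-2p_{\min})\}\big)\ge\varpi(1-p_{\min}-1/n)$. To obtain a clean closed-form threshold, I would pass to a sufficient condition by discarding the nonnegative summand $1-p_{\min}$ inside the logarithm; since $\log$ is increasing, this only strengthens the requirement, reducing the goal to $\log p_{\min}+\varpi(1-2p_{\min})\ge\varpi(1-p_{\min}-1/n)$, which simplifies to $\log p_{\min}\ge-\varpi(1/n-p_{\min})$.

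Finally, noting that $p_{\min}\le 1/n$ always holds (with equality only when $\boldsymbol{p}=\boldsymbol{u}$, in which case the claim is trivial), I would divide by the strictly positive quantity $1/n-p_{\min}$. Because $\log p_{\min}<0$, this rearranges to $\varpi\ge-\log p_{\min}/(1/n-p_{\min})$, which is exactly the hypothesis, completing the argument.

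The step I expect to be the main obstacle is the reduction in the first paragraph: property 7 is stated only for a single merge of two events, so I must argue that iterating it to collapse $n-1$ components into one is legitimate and that the resulting inequality chain preserves its direction at every step. A secondary subtlety is that discarding $1-p_{\min}$ in the second step makes the derived threshold sufficient but not necessary; I would flag that the stated bound on $\varpi$ is therefore conservative rather than tight.
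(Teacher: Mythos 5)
Your proposal is correct and follows essentially the same route as the paper: reduce to the two-point distribution $(p_{\min},1-p_{\min})$ via the merging property, factor out $\exp\{\varpi p_{\min}\}$ to reach $\log\big((1-p_{\min})+p_{\min}\exp\{\varpi(1-2p_{\min})\}\big)\ge\varpi(1-p_{\min}-1/n)$, and then drop the $(1-p_{\min})$ summand to extract the closed-form threshold on $\varpi$. The two subtleties you flag (iterating the pairwise merge and the resulting bound being sufficient rather than tight) are real but minor, and the paper glosses over them in exactly the same way.
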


 \begin{proof}
 The proof follows the aforementioned discussion.
 \end{proof}

Theorem \ref{th:m_ary} indicates that if the importance coefficient $\varpi$ is selected sufficiently large, the effect of uniform distribution can be reduced and the minority subset dominates the value of the MIM function, which also reflects the information importance of those minority subsets in the system detection, especially in big data.
    Moreover, it is seen that the reasonable range of importance coefficient $\varpi$ for the minority subset detection also depends on the minimum probability $p_{\min}$ of events. %In fact, such a result can be improved.

\begin{theorem} \label{th:mary_2}
     Given a probability distribution $\boldsymbol{p}=\{p_1,p_2,\cdots, p_n\}$ without zero elements, for each $p_{s}$ satisfying $p_s<1/n$, if $\varpi$   satisfies
 \begin{align}
        \varpi \geq -\frac{\log p_{s}}{1/n-p_{s}},
 \end{align}
 then we have
 \begin{equation}
  L(\boldsymbol{p}, \varpi)\geq L(\boldsymbol{u}, \varpi),
  \end{equation}
  where  $\boldsymbol{u}=\{1/n,1/n,\cdots,1/n\}$ is the uniform distribution for $n>2$.
 \end{theorem}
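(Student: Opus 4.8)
The plan is to mirror the proof of Theorem \ref{th:m_ary}, but to isolate an arbitrary below-average component $p_s$ rather than the global minimum $p_{\min}$. The key realization is that the minimality of $p_{\min}$ was never genuinely needed in the earlier argument; what was actually used is (i) the merging half of the event decomposition/merging property (property 7) to collapse $\boldsymbol{p}$ to a Bernoulli distribution, and (ii) the fact that the isolated probability lies strictly below $1/n$, so that $1/n-p_s>0$. Both of these hold for \emph{any} $p_s<1/n$, which is precisely what makes the generalization go through.

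First I would apply the merging property repeatedly, combining the $n-1$ components of $\boldsymbol{p}$ other than $p_s$ into a single event of probability $1-p_s$. Since each merge can only decrease the MIM, this yields the lower bound
\[
L(\boldsymbol{p},\varpi) \geq \log\big(p_s\exp\{\varpi(1-p_s)\} + (1-p_s)\exp\{\varpi p_s\}\big).
\]
The hypothesis $n>2$ guarantees that the residual mass $1-p_s$ is distributed over at least one genuine other event, so the merge is well defined. Next I would discard the strictly positive term $(1-p_s)\exp\{\varpi p_s\}$ inside the logarithm to obtain the cruder but cleaner bound
\[
L(\boldsymbol{p},\varpi) \geq \log p_s + \varpi(1-p_s).
\]
Recalling from the uniform-distribution property that $L(\boldsymbol{u},\varpi)=\varpi(1-1/n)$, it then suffices to verify $\log p_s + \varpi(1-p_s) \geq \varpi(1-1/n)$. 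Rearranging gives $\log p_s \geq -\varpi(1/n-p_s)$, and since $1/n-p_s>0$ this is equivalent to $\varpi \geq -\log p_s/(1/n-p_s)$, which is exactly the stated hypothesis.

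I do not anticipate a genuinely hard step, since the argument is a structural weakening of Theorem \ref{th:m_ary}. The only point deserving care is the direction of the inequality when dividing by $1/n-p_s$: because $\log p_s<0$ while $1/n-p_s>0$, one must track signs so that the threshold on $\varpi$ emerges positive and with the correct orientation. I would also emphasize explicitly that the conclusion holds simultaneously for every admissible $p_s$, which is the whole payoff of the generalization: any sufficiently improbable event, not merely the rarest one, already forces $L(\boldsymbol{p},\varpi)\geq L(\boldsymbol{u},\varpi)$. In particular the tightest sufficient condition is recovered by selecting the $p_s$ that minimizes $-\log p_s/(1/n-p_s)$, which need not be $p_{\min}$, so Theorem \ref{th:mary_2} can yield a strictly smaller (hence easier to satisfy) threshold than Theorem \ref{th:m_ary}.
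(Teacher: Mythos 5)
Your proposal is correct and follows essentially the same route as the paper: the paper's proof of Theorem~\ref{th:mary_2} likewise invokes the merging half of property~7 to reduce $\boldsymbol{p}$ to the Bernoulli distribution $(p_s,1-p_s)$ and then reruns the Theorem~\ref{th:m_ary} computation with $p_s$ in place of $p_{\min}$, arriving at the identical threshold. Your only deviation is cosmetic --- you drop the term $(1-p_s)\exp\{\varpi p_s\}$ in one step rather than after factoring out $\exp\{\varpi p_s\}$, which yields the same inequality $\log p_s \geq -\varpi(1/n-p_s)$.
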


 \begin{proof}
According to the events decomposition and merging property (see property 7), for each $p_s$ satisfying $p_s<1/n$, we have
 \begin{align}
\hspace{-1mm}        L(\boldsymbol{p}, \varpi) \hspace{-0.6mm}\geq \log\big(p_{s}\exp\{\varpi(1 \hspace{-0.6mm} - \hspace{-0.6mm} p_{s}\} \hspace{-0.6mm} +\hspace{-0.6mm} (1\hspace{-0.6mm}-\hspace{-0.6mm}p_{s}) \hspace{-0.6mm} \exp\{\varpi p_{s}\}\big).
 \end{align}

By substituting $p_{\min}$ with $p_{s}$, it is seen that if $\varpi$ satisfies
 \begin{align}
        \varpi \geq -\frac{\log p_{s}}{1/n-p_{s}},
 \end{align}
 then we have
 \begin{equation}
  L(\boldsymbol{p}, \varpi)\geq L(\boldsymbol{u}, \varpi),
  \end{equation}
  This completes the proof.
  \end{proof}

 By summarizing the Theorem \ref{th:m_ary} and Theorem \ref{th:mary_2}, we have the following main theorem.
  \begin{theorem}\label{th:main}
    Given a probability distribution $\boldsymbol{p}=\{p_1,p_2,\cdots, p_n\}$ without zero elements, if $\varpi$  satisfies
 \begin{align}
    \varpi \geq \min_{p_i: \{p_i < 1/n\}} \left\{-\frac{\log p_{i}}{1/n-p_{i}}\right\},
 \end{align}
 then we have
 \begin{equation}
  L(\boldsymbol{p}, \varpi)\geq L(\boldsymbol{u}, \varpi),
  \end{equation}
  where  $\boldsymbol{u}=\{1/n,1/n,\cdots,1/n\}$ is the uniform distribution for $n>2$.
\end{theorem}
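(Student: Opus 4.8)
The plan is to derive Theorem~\ref{th:main} directly from Theorem~\ref{th:mary_2}, which already supplies, for \emph{every} index $s$ with $p_s < 1/n$, an independently sufficient threshold on the importance coefficient (Theorem~\ref{th:m_ary} being merely the special case $s = \arg\min_i p_i$). First I would fix the notation $\tau_i = -\frac{\log p_i}{1/n - p_i}$ for each index in the set $S = \{i : p_i < 1/n\}$, so that the hypothesis of the theorem reads $\varpi \geq \min_{i \in S} \tau_i$. An initial verification is that $S$ is nonempty whenever $\boldsymbol{p} \neq \boldsymbol{u}$: since $\sum_i p_i = 1$ with the components not all equal, the average $1/n$ is strictly exceeded by some component and hence strictly undercut by another, so $S$ contains at least one index. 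In the degenerate case $\boldsymbol{p} = \boldsymbol{u}$ the asserted inequality holds with equality and there is nothing to prove.

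Because $S$ is finite and nonempty, the minimum $\min_{i \in S} \tau_i$ is attained at a concrete index $s^{\ast} \in S$. The single substantive step is then the observation that the hypothesis $\varpi \geq \min_{i \in S} \tau_i = \tau_{s^{\ast}}$ is precisely the hypothesis of Theorem~\ref{th:mary_2} specialized to $s = s^{\ast}$. Invoking that theorem for this one index immediately yields $L(\boldsymbol{p},\varpi) \geq L(\boldsymbol{u},\varpi)$, completing the argument.

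The conceptual point worth flagging---and the only place a reader might stumble---is why the combining operation is a \emph{minimum} over the thresholds rather than a maximum. Since Theorem~\ref{th:mary_2} certifies that each bound $\varpi \geq \tau_s$ is \emph{on its own} sufficient, one needs only a single index whose threshold is met, and the least demanding such requirement is exactly $\varpi \geq \min_{i \in S} \tau_i$. I would also remark that this minimum can genuinely lie below $\tau_{\arg\min_i p_i}$, since $\tau_i$ blows up both as $p_i \to 0$ and as $p_i \to 1/n^{-}$; this is what makes Theorem~\ref{th:main} a true sharpening of Theorem~\ref{th:m_ary} rather than a restatement. No analytic obstacle remains: the entire content is the selection of the sharpest admissible threshold from the family already established, together with the trivial nonemptiness of $S$.
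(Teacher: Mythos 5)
Your proposal is correct and follows essentially the same route as the paper, which obtains Theorem~\ref{th:main} simply ``by summarizing'' Theorems~\ref{th:m_ary} and~\ref{th:mary_2}, i.e., by invoking Theorem~\ref{th:mary_2} at the index attaining the minimum threshold. Your added remarks on the nonemptiness of the index set and the degenerate case $\boldsymbol{p}=\boldsymbol{u}$ are small but welcome clarifications that the paper leaves implicit.
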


As indicated by Theorem \ref{th:main}, it is seen that $1/n$, is a critical threshold for the elements of a probability distribution, which is not achievable in the selection of  the importance coefficient $\varpi$.
 Naturally, this raises another interesting problem: if the minimum probability $p_{\min}$  approaches $1/n$, what will happen?

According to Theorem \ref{th:main}, it is seen that as $p_{\min}$ approaches $1/n$, the required importance coefficient $\varpi$ goes to infinity, which is not an expected answer.
    Recall that the $\varpi$ must satisfy
\begin{eqnarray}\nonumber
\hspace{-3mm}    \log\big((1-p_{\min})+p_{\min}\exp\{\varpi\hspace{-3mm}&&\hspace{-3mm}(1-2p_{\min})\}\big) \\
            \hspace{-3mm}&&\hspace{-3mm}\geq \varpi(1-p_{\min}-1/n).
 \end{eqnarray}

 For a given $\varpi$,  as $p_{\min}$ approaches $1/n$, we have
 \begin{align}
    \log\hspace{-0.3mm}\big(1 \hspace{-0.3mm} -\hspace{-0.3mm}1/n)\hspace{-0.3mm}+\hspace{-0.3mm}
    1/n\exp\{\varpi(1\hspace{-0.3mm}-\hspace{-0.3mm}2/n)\}\big) \hspace{-0.3mm}
     \geq \hspace{-0.3mm} \varpi (1\hspace{-0.3mm}-\hspace{-0.3mm}2/n),
 \end{align}
which  is contradict to the fact because $  \log[(1-1/n)+1/n\exp\{(1-2/n)\}] <\varpi (1-2/n)$ holds for all $n>2$.
    This means that when $p_{\min}$ approaches to $1/n$, there does not exist any reasonable solution of  $\varpi$ to characterize the minority subset importance.
In other words, when a distribution is close to the uniform distribution, uniform distribution always has the maximum importance value,  for any finite $\varpi$.
    In this case, the  message importance measure works in a same way as the well known Shannon and Renyi entropies.

%%%% 后面的公式(94)
\setcounter{TempEqCnt}{\value{equation}} % 将当前公式序号 赋给TempEqCnt
\setcounter{equation}{93} % 当前公式序号变为x，x等于长公式应有的序号减1.
 \begin{figure*}[!t]
 \hrulefill
 \vspace*{10pt}
 \begin{equation}
        E\leq \min_{\alpha: 0<\alpha<1} \left\{\omega_0^\alpha (1-\omega_0)^{1-\alpha} \int p_0^{\alpha}(x)\left\{\sum_{k=1}^{M-1}\frac{\omega_k}{1-\omega_0}p_k(x)\right\}^{1-\alpha} dx\right\}.
 \end{equation}
 \vspace*{10pt}
   \hrulefill
  \end{figure*}
  \setcounter{equation}{\value{TempEqCnt}} % 把TempEqCnt中存的公式序号赋回给当前公式序号

   \section{Discussions}\label{sec:4_discus}
 % select such a criterion to
 \subsection{Binary Hypothesis Testing and Bayes Decision Rule}
The conventional binary hypothesis testing focus on determining a decision rule  which classifies  a random observation $x$ into one of the two possible classes of outcomes while minimizing the average decision error.
    We denote $H_0$ and $H_1$ as two event classes (hypothesises) with probabilities $\omega_0=P(H_0)$ and $\omega_1=P(H_1)=1-\omega_0$, respectively.
Let $p_0(x)=P(x|H_0)$ and $p_1(x)=P(x|H_1)$ the corresponding posterior probabilities.
    In this case, the Bayes decision rules classifies $x$ as $H_0$ if $\omega_0 p_0(x) >\omega_1 p_1(x)$, and the corresponding decision error is given by
 \begin{eqnarray}
    E \hspace{-3mm}&=&\hspace{-3mm}  \int \min\{P(H_0|x),P(H_1|x)\}p(x) dx\\
        \hspace{-3mm}&=&\hspace{-3mm}  \int \min\{\omega_0 p_0(x), \omega_1 p_1(x)\}dx,
 \end{eqnarray}
 where $P(H_0|x)=\frac{\omega_0 p_0(x)}{p(x)}$,  $P(H_1|x)=\frac{\omega_1 p_1(x)}{p(x)}$, and $p(x)$ is the probability distribution of $x$.

 By using the inequality
 \begin{align}
    \min\{a, b\} \leq a^\alpha b^{1-\alpha}, ~~ \forall \alpha \in (0,1),\text{and $a>0, b>0 $},
 \end{align}
we have
 \begin{align}
    E\leq \omega_0^\alpha \omega_1^{1-\alpha} \int p_0^{\alpha}(x) p_1^{1-\alpha}(x) dx
 \end{align}

 By further optimizing the estimation error over $\alpha$, we have
 \begin{align}
        E\leq \min_{\alpha: 0<\alpha<1} \left\{\omega_0^\alpha \omega_1^{1-\alpha} \int p_0^{\alpha}(x) p_1^{1-\alpha}(x) dx\right\}.
 \end{align}

% Of course, one can follow the way in information theory by using Chernoff information rate or Chernoff divergence to deal with it.

The following example further explains this algorithm in details.
    In particular, we consider the following two hypothesises:
 $H_0$: $p_0(x)=\frac{1}{\sqrt{2\pi \sigma^2}} \exp\{-\frac{(x-\mu_0)^2}{2\sigma^2}\}$ and
 $H_1$: $p_1(x)=\frac{1}{\sqrt{2\pi \sigma^2}} \exp\{-\frac{(x-\mu_1)^2}{2\sigma^2}\}$.

 Then the decision error is bounded by
 \begin{eqnarray}
   \hspace{-6mm} E \hspace{-3mm}&\leq&\hspace{-3mm} \min_{\alpha: 0<\alpha<1} \left\{\omega_0^{\alpha} \omega_1^{1-\alpha}
                \exp \left\{-\frac{\alpha (1-\alpha)(\mu_0-\mu_1)^2}{2\sigma^2} \right\} \right\}\\ \nonumber
  \hspace{-3mm}&=&\hspace{-3mm} \min_{\alpha: 0<\alpha<1}
                \exp\Big\{ \alpha\log \omega_0 +(1-\alpha)\log \omega_1 \\
  \hspace{-3mm}&&\hspace{-3mm}  \qquad  \qquad  \qquad \qquad\qquad  -\frac{\alpha (1-\alpha)(\mu_0-\mu_1)^2}{2\sigma^2}\Big\}\\
  \label{rt:esti_e}
 \hspace{-3mm}&=&\hspace{-3mm} \min_{\alpha: 0<\alpha<1}  \hspace{-0.3mm}
                \exp\{ \alpha\log \omega_0 \hspace{-0.6mm} + \hspace{-0.6mm} (1 \hspace{-0.5mm} - \hspace{-0.5mm}\alpha)\log\omega_1
                  \hspace{-0.3mm} - \hspace{-0.3mm} \alpha (1 \hspace{-0.5mm} - \hspace{-0.5mm}\alpha)\beta\},
 \end{eqnarray}
 where $\beta=\frac{(\mu_0-\mu_1)^2}{2\sigma^2}$.\\

 %The equivalent optimization problem can be expressed as
% \begin{equation}
% \min_{\alpha \in (0,1)} \{K(\alpha)=\alpha\log \omega_0 +(1-\alpha)\log\omega_1 -\alpha (1-\alpha)\beta\}
% \end{equation}

For notational simplicity, we denote the estimation error in (\ref{rt:esti_e}) as
\begin{equation}\label{eq:kalpha}
    K(\alpha)=\alpha\log \omega_0 +(1-\alpha)\log\omega_1 -\alpha (1-\alpha)\beta.
\end{equation}

 Calculating the  derivative of $K(\alpha)$ with respect to $\alpha$  and setting it to zero, we have
 \begin{equation}
     K'(\alpha)=\log \omega_0 -\log \omega_1 -\beta (1-2\alpha)=0,
 \end{equation}
for which the  solution $\alpha^{*}$ is given by
\begin{equation}
    \alpha^{*}=\frac{1}{2}+\frac{\log \omega_1 -\log \omega_0}{2\beta}.
\end{equation}

It is clear that when $\omega_0<\omega_1$, the optimal  $\alpha$ is larger than $\frac12 $.

Recall that for binary minority subset detection problem, if one only knows that $\omega_0\in(\omega_0^{(1)},\omega_0^{(2)})$ and $\omega_0^{(2)}\ll 1$, the above optimal processing algorithm can only proceed by
select the worst case, such as, $\omega_0=\omega_0^{(2)} $.
    In case that  $\omega_0^{(1)}$ is much smaller than $\omega_0^{(2)}$, the worst case processing can result in large bias to the decision error estimation.
Therefore, the Chernoff information method can not deal with such kind of minority subset detection problem.

 \subsection{M-ary Minority Subset Detection and Bayes Decision Rule}

 Following the discussion on binary minority subset detection problem, we consider the minority subset detection for the M-ary case in this subsection.

  Denote $H_0, H_1, \cdots, H_{M-1}$  as the $M$ classes with probabilities $\omega_k=P(H_k), (k=0,1,\cdots, M-1)$ and $p_k(x)=P(x|H_k)$.
        Without loss of generality, we assume that the minority subset is the class $H_0$.
  According to  the Bayes decision rule, $x$ is classified as $H_0$ if $\omega_0 p_0(x) > \sum_{k=1}^{M-1}\omega_k p_k(x)$.
        The corresponding decision error is given by
 \begin{eqnarray}
         E \hspace{-3mm}&\ = &\hspace{-3mm} \int \min\{P(H_0|x),P(\bar{H}_0|x)\}p(x) dx\\
             \hspace{-3mm}&\ = &\hspace{-3mm} \int \min\{\omega_0 p_0(x), \sum_{k=1}^{M-1} \omega_k p_k(x)\}dx,
 \end{eqnarray}
 where $P(H_0|x)=\frac{\omega_0 p_0(x)}{p(x)}$, $P(\bar{H}_0|x)=\frac{\sum_{k=1}^{M-1}\omega_k p_k(x)}{p(x)}$, and $p(x)$ is the probability distribution of $x$.

 Likewise, we have
 \begin{align}
        E\leq \omega_0^\alpha (1-\omega_0)^{1-\alpha} \int p_0^{\alpha}(x)\left\{\sum_{k=1}^{M-1}\frac{\omega_k}{1-\omega_0}p_k(x)\right\}^{1-\alpha} dx
 \end{align}
which can be further simplified into (94) as shown on the top of this page.
\setcounter{equation}{94}

 It is clear that if  one only has the knowledge that $\omega_0 \in (\omega_0^{(1)},\omega_0^{(2)})$ and $\omega_0^{(2)}\ll 1$, the conventional Bayes decision rule and Chernoff Information tools can not give a satisfying estimation of the minority subset probability $\omega_0$.
    On the contrary, the message importance measure proposed in this paper performs much better, as shown in Subsection \ref{sec:3a} and Subsection \ref{sec:3b}.

 %%%%%%%%%%%%%%%%%%%%%%%%%%%%%%
\begin{figure*}[htp]

\hspace{-6 mm}
    \begin{tabular}{cc}
    \subfigure[Message importance measure versus $\varpi$]{
    \begin{minipage}[t]{0.5\textwidth}
    \centering
    {\includegraphics[width = 84mm] {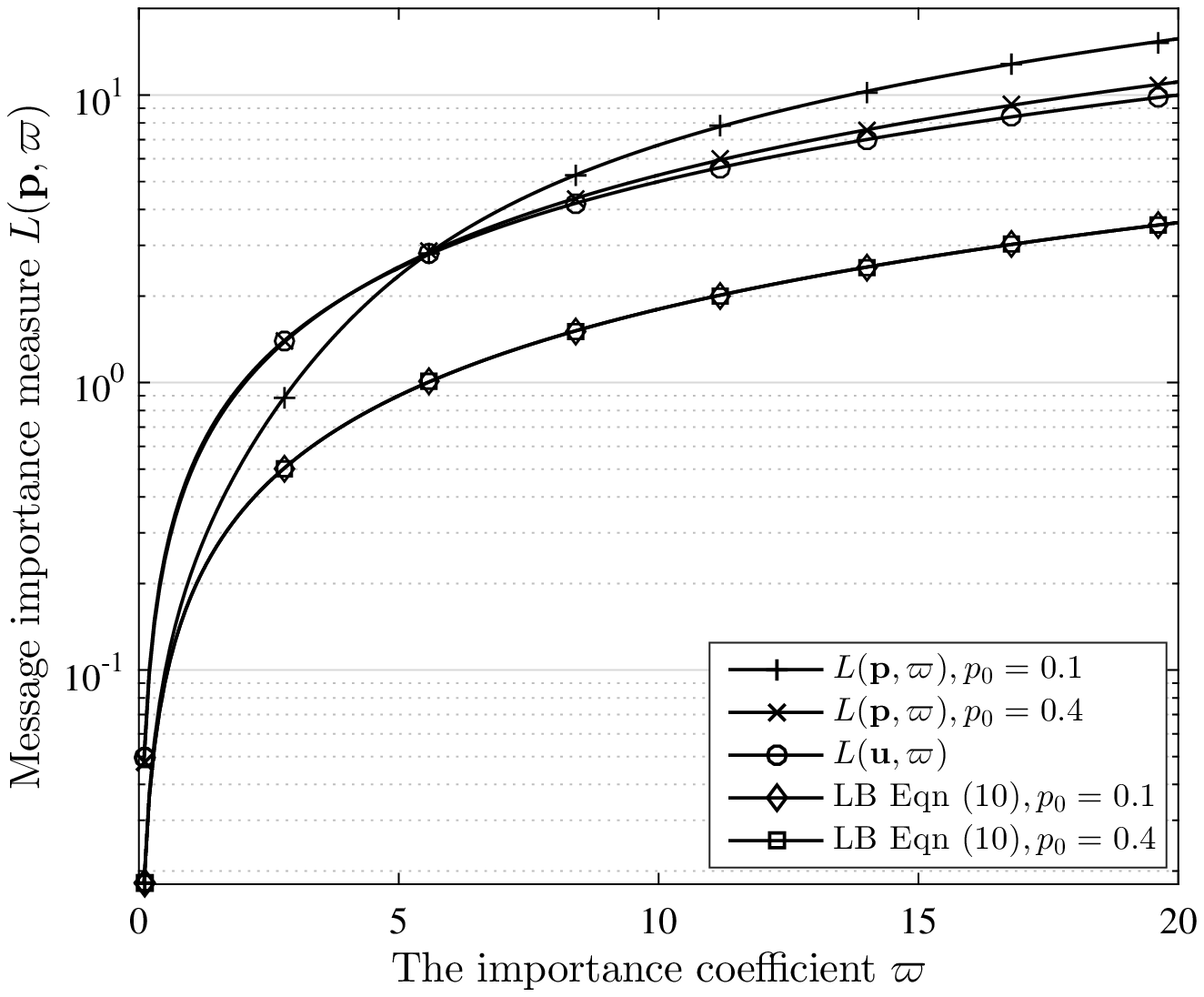} \label{fig:lpw}}
    \end{minipage}}

    \hspace{0.01\textwidth}
    \subfigure[Message importance measure versus $p_0$]{
    \begin{minipage}[t]{0.5\textwidth}
    \centering
    {\includegraphics[width = 84mm] {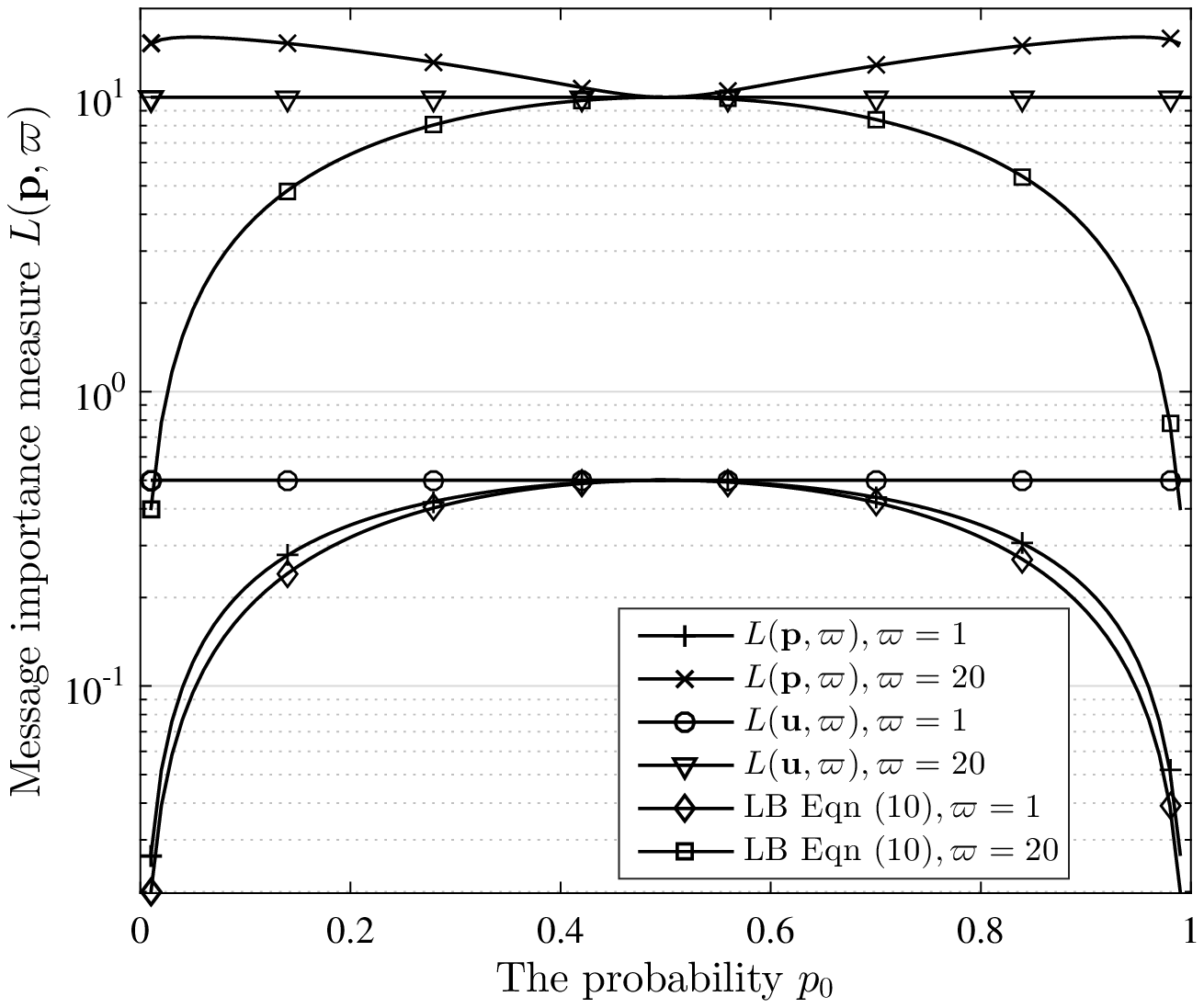} \label{fig:lpp}}
    \end{minipage}}\\

    \end{tabular}

\caption{The message importance measure in the binary case.} \label{fig:lwp}
\end{figure*}
%%%%%%%%%%%%%%%%%%%%%%%%%%%%%%

 \section{Numerical Results} \label{sec:5_sim}
 In this section, we further illustrates the properties of message importance measure and its effectiveness through numerical results.

 \subsection{The property of MIM}
 Taking the Bernoulli distribution $\boldsymbol{p}_b=\{p_0,p_1\}$ and binary uniform distribution $\boldsymbol{u}_b=\{0.5, 0.5\}$ as  examples, we present  how MIM varies with importance coefficient $\varpi$ and probability $p_0$, as shown in Fig. \ref{fig:lwp}.

 From Fig. \ref{fig:lpw}, it is seen that $L(\boldsymbol{p}_b, \varpi)|_{\varpi=0.1}$ is larger than $L(\boldsymbol{u}_b, \varpi)$ if $\varpi>5.\dot{5}$, which is in agreement with property 4 and Lemma \ref{lem:2}.
    It is also noted that the lower bound given by (\ref{rt:lowerbound}) in property 3 is tighter when $\varpi$ is relatively small.
 Fig. \ref{fig:lpp} presents how the MIM changes with probability $p_0$.
    First, it is clear that the curve is symmetric about $p_0=0.5$.
 Second, when $\varpi$ is relatively small, i.e. $\varpi=1$, we have  $L(\boldsymbol{p}_b, \varpi)<L(\boldsymbol{u}_b, \varpi)$. On the contrary, $L(\boldsymbol{p}_b, \varpi)>L(\boldsymbol{u}_b, \varpi)$ for $\varpi=20$, which is relatively large.

 \subsection{Importance Coefficient $\varpi$ selection in M-ary  minority subset detection}

In this subsection, we investigate the selection of importance coefficient $\varpi$ according to Theorem 1--3.
    We consider a distribution $\boldsymbol{p}=[0.0925, 0.3156, 0.3887, 0.1484, 0.0549]$.
By Theorem 1, we have $\varpi_0=20.0011$ and $L(\boldsymbol{p},\varpi)>L(\boldsymbol{u},\varpi)$ when $\varpi>\varpi_0$, as shown in Fig. \ref{fig:3w0}.
    The performance of lower bound (\ref{rt:m_lbound}) is also presented in Fig. \ref{fig:3w0}.
        As seen, the lower bound is quite tight when $\varpi$ is large.

\begin{figure}[!t]
\centering
\includegraphics[width=3.5in]{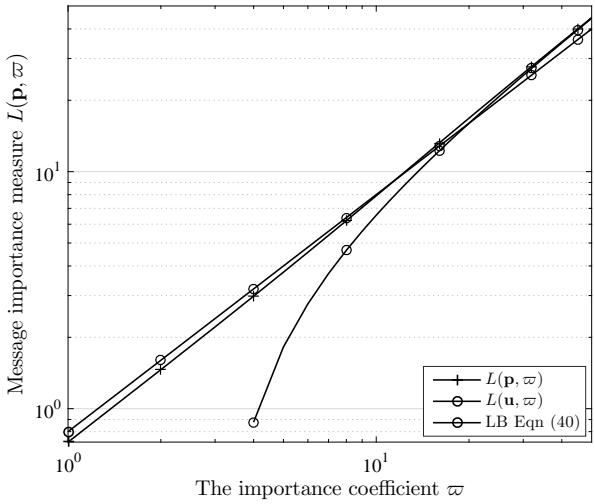}
\caption{The the selection of $\varpi$.} \label{fig:3w0}
\end{figure}

\section{Conclusion} \label{sec:6_conclusion}

In this paper, we investigated the message importance evaluation problem and proposed an new parametric information measure, i.e., the message importance measure, which keeps similar properties as Shannon entropy and Renyi entropy, in characterizing the event uncertainty.
    Moreover, the message importance measure has its own distinct properties by highlighting the importance of atypical events with small occurring probabilities.
This makes message importance measure a promising measure for the statistical processing of big data.
    We have investigated the major properties of message importance measure, and presented an binary detection algorithm.
We also have presented the selection rule for the importance coefficient $\varpi$ for more general minority subset detection problems.
    Designing better minority subset detection algorithms and  investigating their performance are of our future interests.

\section*{Acknowledgement}

This work was supported by the China Major State Basic Research Development Program (973 Program) No.2012CB316100(2), National Natural Science Foundation of China(NSFC) No.61171064, No. 61321061 and China Scholarship Council.

{\small

}

\end{document}